\documentclass[11pt]{article}
\usepackage[top=1 in, bottom=1 in, left=1 in, right=1 in, letterpaper]{geometry}
\usepackage{algorithm}
\usepackage[noend]{algorithmic}
\usepackage{amssymb,amsmath}
\usepackage{multirow}
\usepackage{tikz}
\usepackage[affil-it]{authblk}
\usetikzlibrary{arrows,shapes}

\newtheorem{theorem}{Theorem}

\newtheorem{proposition}[theorem]{Proposition}
\newtheorem{corollary}{Corollary}
\newenvironment{proof}{{\bf Proof.}}{\hfill\rule{2mm}{2mm}\\}
\newcommand{\algo}[1]{\textsc{Algorithm~#1}}

\title{Improved Approximation Algorithms\\ for the Non-preemptive Speed-scaling Problem\thanks{Research supported by the
French Agency for Research under the DEFIS program
TODO, ANR-09-EMER-010, by GDR-RO of CNRS, and by THALIS-ALGONOW.}}
\author[1]{Evripidis Bampis}
\author[1]{Giorgio Lucarelli}
\author[1,2]{Ioannis Nemparis}
\affil[1]{LIP6, Universit\'{e} Pierre et Marie Curie, Paris, France\\
\emph{\texttt{\{Evripidis.Bampis,Giorgio.Lucarelli\}@lip6.fr}}}
\affil[2]{Dept. of Informatics and Telecommunications, NKUA, Athens, Greece\\
\emph{\texttt{sdi0700181@di.uoa.gr}}}
\date{}

\begin{document}

\maketitle

\begin{abstract}
We are given a set of jobs, each one specified by its release date, its deadline and its processing volume (work),
and a single (or a set of) speed-scalable processor(s).
We adopt the standard model in speed-scaling in which if a processor runs at speed $s$ then the energy consumption is $s^{\alpha}$ per time unit,
where $\alpha>1$.
Our goal is to find a schedule respecting the release dates and the deadlines of the jobs so that the total energy consumption is minimized.
While most previous works have studied the preemptive case of the problem, where a job may be interrupted and resumed later,
we focus on the non-preemptive case where once a job starts its execution, it has to continue until its completion without any interruption.
We propose improved approximation algorithms for the multiprocessor non-preemptive speed-scaling problem for particular families of instances,
namely instances where all jobs have a common release date (or a common deadline),
instances where all jobs are active at some time, and agreeable instances.
\end{abstract}

\section{Introduction}

One of the main mechanisms used for minimizing the energy consumption in computing systems and portable devices is the so called speed-scaling mechanism.
In this setting, the speed of a processor may change dynamically.
If the speed of the processor is $s(t)$ at time $t$ then its power is $s(t)^{\alpha}$, for some constant $\alpha>1$,
and the energy consumption is the power integrated over time.

There is a series of papers in the speed-scaling literature.
In their seminal paper, Yao et al.~\cite{Yao95} proposed a polynomial-time algorithm
for the energy minimization problem of scheduling a set of $n$ jobs $\mathcal{J}$,
where each job $J_j \in \mathcal{J}$ is characterized by its processing volume (work) $w_j$,
its release date $r_j$ and its deadline $d_j$, on a single speed-scalable processor, assuming
 that the preemption of the jobs, i.e. the possibility to interrupt the execution of a job and resume it later, is allowed.
Most of the subsequent works on energy minimization in the speed-scaling setting allow jobs to preempt.
However in practice, preemption causes an important overhead and it is sometimes even impossible, i.e., in the case where external resources are used.
Hence, it is natural to disallow it.
Only recently, Antoniadis and Huang, in \cite{Antoniadis12}, studied the non-preemptive version of the energy minimization problem
introduced by Yao et al. \cite{Yao95}, and they proved that the problem becomes strongly $\mathcal{NP}$-hard.
They have also proposed a constant factor approximation algorithm.

The same paper also studied the problem for a particular family of instances, namely \emph{laminar} instances.
Here for any two jobs $J_j$ and $J_{j'}$ with $r_j \leq r_{j'}$ it holds that either $d_j \geq d_{j'}$ or $d_j \leq r_{j'}$.
In fact, such instances typically arise when recursive calls in a program create new jobs.
Another interesting family of instances studied in the literature  is the family of \emph{agreeable} instances.
In an agreeable instance, for any two jobs $J_j$ and $J_{j'}$ with $r_j \leq r_{j'}$ it holds that $d_j \leq d_{j'}$,
i.e. latter released jobs have latter deadlines.
Such instances may arise in situations where the goal is to maintain a fair service guarantee for the waiting time of jobs.
Two more special families of instances that we use in our study below are the \emph{clique} instances and the \emph{pure-laminar} instances.
In a clique instance, for any two jobs $J_j$ and $J_{j'}$ with $r_j \leq r_{j'}$ it holds that $d_j \geq r_{j'}$.
In other words, in a clique instance there is a time, $T$, where all jobs are active;
for example $T=\min\{d_j, J_j \in \mathcal{J}\}$ or $T=\max\{r_j, J_j \in \mathcal{J}\}$.
In a pure-laminar instance, for any two jobs $J_j$ and $J_{j'}$ with $r_j \leq r_{j'}$ it holds that $d_j \geq d_{j'}$.
Note that the family of pure-laminar instances is a special case of both laminar and clique instances.
Finally, two other interesting special cases of all the above families, studied by several works in scheduling,
are those where all the jobs have either a common release date or a common deadline.

\subsection{Related work}

As mentioned above, for the preemptive single-processor case, Yao et al. \cite{Yao95} proposed an optimal algorithm
for finding a feasible schedule with minimum energy consumption.
Using an extension of the classical three-field notation, this problem can be denoted as $S1|r_j,d_j,pmtn|E$.
The multiprocessor case, $S|r_j,d_j,pmtn|E$, where there are $m$ available processors has
been solved optimally in polynomial time when preemption and migration of jobs are allowed \cite{Albers11,Angel12,Bingham08}.
The migration assumption means that a job may be interrupted and resumed on the same processor or on another processor.
However, the parallel execution of parts of the same job is not allowed.

Albers et al. \cite{Albers07} considered the multiprocessor problem $S|r_j,d_j,pmtn,no\text{-}mig|E$,
where the preemption of the jobs is allowed, but not their migration.
They first studied the problem where each job has unit work.
They proved that the problem is polynomial time solvable for instances with agreeable deadlines.
For general instances with unit-work jobs, they proved that the problem becomes strongly $\mathcal{NP}$-hard and
they proposed an $(\alpha^{\alpha}2^{4\alpha})$-approximation algorithm.
For the case where the jobs have arbitrary works, the problem was proved to be $\mathcal{NP}$-hard
even for instances with common release dates and common deadlines.
Albers et al. proposed a $2(2-\frac{1}{m})^{\alpha}$-approximation algorithm for instances with common release dates,
or common deadlines,
and an $(\alpha^{\alpha}2^{4\alpha})$-approximation algorithm for instances with agreeable deadlines.
Greiner et al. \cite{Greiner09} gave a generic reduction transforming an optimal schedule for the multiprocessor problem with migration,
$S|r_j,d_j,pmtn|E$, to a $B_{\alpha}$-approximate solution for the multiprocessor problem with preemptions but without migration,
$S|r_j,d_j,pmtn,no\text{-}mig|E$, where $B_{\alpha}$ is the $\alpha$-th Bell number.
This result holds only when $m \geq \alpha$.

It has to be noticed here that for the family of agreeable instances,
and hence for their special families of instances (instances with common release dates and/or common deadlines),
the assumption of preemption and no migration is equivalent to the non-preemptive assumption that we consider throughout this paper.
More specifically, for agreeable instances, any preemptive schedule can be transformed into a non-preemptive one of the same energy consumption,
where the execution of each job $J_j \in \mathcal{J}$ starts after the completion of any other job which is released before $J_j$.
The correctness of this transformation can be easily proved by induction to the order where the jobs are released.
Hence, the results of \cite{Albers07} and \cite{Greiner09} for agreeable deadlines hold for the non-preemptive case as well.

Finally, the most closely related work  is the work of Antoniadis and Huang \cite{Antoniadis12}
who considered the energy minimization single-processor non-preemptive speed-scaling problem.
They first proved that the problem is $\mathcal{NP}$-hard even for pure-laminar instances.
They also presented a $2^{4\alpha-3}$-approximation algorithm for laminar instances and
a $2^{5\alpha-4}$-approximation algorithm for general instances.
Notice that the polynomial-time algorithm for finding an optimal preemptive schedule presented in \cite{Yao95}
returns a non-preemptive schedule when the input instance is agreeable.

In Table~\ref{tbl:known}, we summarize the most related results of the literature.
Several other results concerning scheduling problems in the speed-scaling setting have been presented, involving
the optimization of some QoS criterion under a budget of energy, or the optimization of a linear combination of the energy consumption and some QoS criterion
(see for example \cite{Bampis12,Bunde06,Pruhs08}).
The interested reader can find more details in the recent survey~\cite{Albers10}.
\begin{table}[htb]
\begin{center}
\begin{footnotesize}
\begin{tabular}{|l|l|l|l|}
\hline
\multirow{2}{*}{Problem} & \multirow{2}{*}{Complexity} & \multicolumn{2}{c|}{Approximation ratio} \\
\cline{3-4}
 & & $m < \alpha$ & $m \geq \alpha$ \\
\hline\hline
$S1|r_j,d_j,pmtn|E$ & Polynomial \cite{Yao95} & \multicolumn{2}{c|}{--} \\
$S|r_j=0,d_j=d,pmtn|E$ & Polynomial \cite{Chen04} & \multicolumn{2}{c|}{--} \\
$S|r_j,d_j,pmtn|E$ & Polynomial \cite{Albers11,Angel12,Bingham08} & \multicolumn{2}{c|}{--} \\
\hline
$S|\mbox{agreeable},w_j=1,pmtn,no\text{-}mig|E$ $^{(*)}$ & Polynomial \cite{Albers07} & \multicolumn{2}{c|}{--} \\
$S|r_j,d_j,w_j=1,pmtn,no\text{-}mig|E$ & $\mathcal{NP}$-hard $(m\geq2)$ \cite{Albers07} & $\alpha^{\alpha} 2^{4\alpha}$ \cite{Albers07} & $B_{\alpha}$ \cite{Greiner09} \\
$S|r_j=0,d_j=d,pmtn,no\text{-}mig|E$ $^{(*)}$ & $\mathcal{NP}$-hard \cite{Albers07} & \multicolumn{2}{c|}{$PTAS$ \cite{Hochbaum87,Albers07}} \\
$S|r_j=0,d_j,pmtn,no\text{-}mig|E$ $^{(*)}$ & $\mathcal{NP}$-hard & $2(2-\frac{1}{m})^{\alpha}$ \cite{Albers07} & $\min\{2(2-\frac{1}{m})^{\alpha}, B_{\alpha}\}$ \cite{Albers07,Greiner09} \\
$S|r_j,d_j=d,pmtn,no\text{-}mig|E$ $^{(*)}$ & $\mathcal{NP}$-hard & $2(2-\frac{1}{m})^{\alpha}$ \cite{Albers07} & $\min\{2(2-\frac{1}{m})^{\alpha}, B_{\alpha}\}$ \cite{Albers07,Greiner09} \\
$S|\mbox{agreeable},pmtn,no\text{-}mig|E$ $^{(*)}$ & $\mathcal{NP}$-hard & $\alpha^{\alpha} 2^{4\alpha}$ \cite{Albers07} & $B_{\alpha}$ \cite{Greiner09} \\
$S|r_j,d_j,pmtn,no\text{-}mig|E$ & $\mathcal{NP}$-hard & -- & $B_{\alpha}$ \cite{Greiner09} \\
\hline
$S1|\mbox{agreeable}|E$ & Polynomial \cite{Yao95,Antoniadis12} & \multicolumn{2}{c|}{--} \\
$S1|\mbox{laminar}|E$ & $\mathcal{NP}$-hard \cite{Antoniadis12} & \multicolumn{2}{c|}{$2^{4\alpha-3}$ \cite{Antoniadis12}} \\
$S1|r_j,d_j|E$ & $\mathcal{NP}$-hard & \multicolumn{2}{c|}{$2^{5\alpha-4}$ \cite{Antoniadis12}} \\
\hline
\end{tabular}
\end{footnotesize}
\caption{Complexity and approximability results.
$^{(*)}$The problem is equivalent with the corresponding non-preemptive problem.
}
\label{tbl:known}
\end{center}
\end{table}

\subsection{Our contribution}

In this paper, we explore the approximability for the non-preemptive speed-scaling problem on multiprocessors for special families of instances.
More specifically, in Section~\ref{section:crd-cd} we consider the multiprocessor case where the jobs have either common release dates or common deadlines.
Recall that for these problems algorithms of approximation ratio $2(2-\frac{1}{m})^{\alpha}$ have been presented in \cite{Albers07},
while these results have been improved in \cite{Greiner09} to $B_{\alpha}$, if $\alpha \leq m$ and $\alpha \leq 5$.
We further improve the approximation ratio to $(2-\frac{1}{m})^{\alpha-1}$, for any value of $\alpha$ and $m$.
In Section~\ref{section:agreeable}, we consider the agreeable multiprocessor case and
we present a $(4(2-\frac{1}{m}))^{\alpha-1}$-approximation algorithm.
However, for $\alpha \leq m$, a $B_{\alpha}$-approximation algorithm for the case has been presented in \cite{Greiner09}.
Hence, this result dominates our approximation factor for $\alpha \leq m$.
Even if for practical applications the assumption made in \cite{Greiner09} that $\alpha \leq m$ is justified,
from a theoretical point of view it is worthwhile to investigate the approximability of the problem when $\alpha > m$.
For the latter case, our result improves the ratio of $\alpha^{\alpha}2^{4\alpha}$ given in \cite{Albers07}.

A summary of our results compared with the previously known results is given in Table~\ref{tbl:results}.
\begin{table}[htb]
\begin{center}
\begin{tabular}{|l|l|l|l|}
\hline
\multirow{2}{*}{Problem} & \multicolumn{2}{c|}{Previous result}  & \multirow{2}{*}{Our result}\\
\cline{2-3}
 & $m < \alpha$ & $m \geq \alpha$ & \\
\hline\hline
$S|r_j=0,d_j|E$ & $2(2-\frac{1}{m})^{\alpha}$ \cite{Albers07} & $\min\{2(2-\frac{1}{m})^{\alpha}, B_{\alpha}\}$ \cite{Albers07,Greiner09} & $(2-\frac{1}{m})^{\alpha-1}$\\
$S|r_j,d_j=d|E$ & $2(2-\frac{1}{m})^{\alpha}$ \cite{Albers07} & $\min\{2(2-\frac{1}{m})^{\alpha}, B_{\alpha}\}$ \cite{Albers07,Greiner09} & $(2-\frac{1}{m})^{\alpha-1}$\\
$S|\mbox{clique}|E$ & -- & -- & $(2(2-\frac{1}{m}))^{\alpha-1}$\\
$S|\mbox{agreeable}|E$ & $\alpha^{\alpha}2^{4\alpha}$ \cite{Albers07} & $B_{\alpha}$ \cite{Greiner09} & $(4(2-\frac{1}{m}))^{\alpha-1} < 2^{3\alpha-3}$ \\
\hline
\end{tabular}
\caption{Previous known approximation ratios vs. our approximation ratios.}
\label{tbl:results}
\end{center}
\end{table}

\subsection{Notation and Preliminaries}



A fundamental property of optimal schedules in the speed-scaling model, which is also true for the problems we study in this paper,
is that any job $J_j \in \mathcal{J}$ runs at a constant speed $s_j$ due to the convexity of the speed-to-power function.
Given a schedule $\mathcal{S}$ and a job $J_j \in \mathcal{J}$,
we denote by $E(\mathcal{S}, J_j)=w_j s_j^{\alpha-1}$ the energy consumed by the execution of $J_j$ in $\mathcal{S}$
and by $E(\mathcal{S})=\sum_{j=1}^n E(\mathcal{S}, J_j)$ the total energy consumed by $\mathcal{S}$.
Given an instance $\mathcal{I}$ and a job $J_j \in \mathcal{J}$,
we denote by $w_j(\mathcal{I})$ the work, by $r_j(\mathcal{I})$ the release date and by $d_j(\mathcal{I})$ the deadline of $J_j$ in $\mathcal{I}$.
We denote by $\mathcal{S}^*$ an optimal non-preemptive schedule for the input instance $\mathcal{I}$.
For each job $J_j \in \mathcal{J}$, we call the interval $[r_j,d_j]$ the \emph{active interval} of $J_j$.

The following proposition has been proved in \cite{Antoniadis12} for $S1|r_j,d_j|E$
but holds also for the corresponding problem on parallel processors.

\begin{proposition} \label{prop:convexity}
\emph{\cite{Antoniadis12}}
Suppose that the schedules $\mathcal{S}$ and $\mathcal{S}'$ process job $J_j$ with speed $s$ and $s'$ respectively.
Assume that $s \leq \gamma s'$ for some $\gamma \geq 1$. Then $E(\mathcal{S},j) \leq \gamma^{\alpha-1} E(\mathcal{S}',j)$.
\end{proposition}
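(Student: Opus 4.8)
The plan is to work directly from the closed-form expression for the energy spent on a single job that runs at constant speed. Since $J_j$ has the same work $w_j$ in both schedules, the definitions give $E(\mathcal{S},j)=w_j s^{\alpha-1}$ and $E(\mathcal{S}',j)=w_j (s')^{\alpha-1}$. Thus the claim reduces to comparing the two quantities $w_j s^{\alpha-1}$ and $\gamma^{\alpha-1} w_j (s')^{\alpha-1}$, and everything should follow from the hypothesis $s \leq \gamma s'$ together with the monotonicity of the power function.

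The key observation is that, because $\alpha>1$ in the speed-scaling model, the exponent $\alpha-1$ is strictly positive, so the map $x \mapsto x^{\alpha-1}$ is nondecreasing on $[0,\infty)$. Applying this to the assumed inequality $s \leq \gamma s'$ gives $s^{\alpha-1} \leq (\gamma s')^{\alpha-1}$, and since $\gamma \geq 1$ and $s' \geq 0$ the right-hand side factors as $\gamma^{\alpha-1}(s')^{\alpha-1}$. Multiplying through by the nonnegative work $w_j$ then yields $w_j s^{\alpha-1} \leq \gamma^{\alpha-1} w_j (s')^{\alpha-1}$, which is precisely $E(\mathcal{S},j) \leq \gamma^{\alpha-1} E(\mathcal{S}',j)$.

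I expect no genuine difficulty in this argument; it is essentially a one-line consequence of the monotonicity of $x \mapsto x^{\alpha-1}$, which is in turn implied by the convexity of the speed-to-power function already invoked in the preliminaries. The only point that deserves explicit mention is that the inequality-preserving step relies on $\alpha-1>0$, which is exactly the standing assumption $\alpha>1$; were $\alpha \leq 1$ the direction of the inequality could reverse. This is why the statement is phrased for the speed-to-power exponent $\alpha>1$ that underlies the entire model.
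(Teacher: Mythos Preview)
Your argument is correct: from $E(\mathcal{S},J_j)=w_j s^{\alpha-1}$, $E(\mathcal{S}',J_j)=w_j (s')^{\alpha-1}$, and the monotonicity of $x\mapsto x^{\alpha-1}$ for $\alpha>1$, the bound $s\leq\gamma s'$ immediately gives $E(\mathcal{S},j)\leq\gamma^{\alpha-1}E(\mathcal{S}',j)$. The paper itself does not prove this proposition but merely cites it from \cite{Antoniadis12}, so there is no alternative approach to compare against; your derivation is exactly the intended one-line computation.
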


\section{From Preemptive to Non-preemptive Scheduling}


In this paper, we explore the idea of transforming an optimal preemptive schedule
into a non-preemptive schedule, guaranteeing that the energy consumption of the latter one does not increase more than a factor $\rho$.
Unfortunately, in the following proposition
we show that for general instances the ratio between an optimal non-preemptive schedule to an optimal preemptive one can be very large.

\begin{proposition}\label{prop:neg}
The ratio of the energy consumption of an optimal non-preemptive schedule
to the energy consumption of an optimal preemptive schedule for the same instance of the single-processor speed-scaling problem can be $O(n^{\alpha-1})$.
\end{proposition}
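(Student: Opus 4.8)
The plan is to exhibit, for every $n$, a single-processor instance $\mathcal{I}_n$ of $n$ jobs whose optimal non-preemptive energy exceeds its optimal preemptive energy by a factor of order $n^{\alpha-1}$. The instance consists of one \emph{heavy} job $J_0$ with large work $w_0=W$, release date $0$ and deadline $n$, together with $n-1$ tiny \emph{blocker} jobs $J_1,\dots,J_{n-1}$, where $J_i$ has very small work $\delta$ and a very short active interval $[i-\epsilon,\,i+\epsilon]$ centred at the integer $i$, for small parameters $\epsilon,\delta>0$ that I will send to $0$ at the end. The intuition is that preemption lets $J_0$ be spread almost uniformly over the whole horizon $[0,n]$ (slipping past the blockers), whereas non-preemption forces $J_0$ into a single contiguous window, and the blockers fragment $[0,n]$ into free pieces of length only about $1$.

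First I would lower bound $E(\mathcal{S}^*)$. In any feasible non-preemptive schedule each blocker $J_i$ must occupy some time instant $t_i\in[i-\epsilon,\,i+\epsilon]$, and $J_0$ runs in a single contiguous interval that must avoid all these instants. Since consecutive occupied instants satisfy $t_{i+1}-t_i\le (i+1+\epsilon)-(i-\epsilon)=1+2\epsilon$, and the two boundary gaps $[0,t_1]$ and $[t_{n-1},n]$ are no longer, every maximal free interval has length at most $1+2\epsilon$. Hence the contiguous block used by $J_0$ has length at most $1+2\epsilon$, so its speed is at least $W/(1+2\epsilon)$, and by convexity of $s\mapsto s^{\alpha}$ (in the spirit of Proposition~\ref{prop:convexity}) we get $E(\mathcal{S}^*)\ge E(\mathcal{S}^*,J_0)\ge W^{\alpha}/(1+2\epsilon)^{\alpha-1}$.

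Next I would upper bound the optimal preemptive energy by exhibiting one feasible preemptive schedule. Reserve, inside each window $[i-\epsilon,\,i+\epsilon]$, a tiny slot of width $\tau$ in which $J_i$ runs at speed $\delta/\tau$, and let $J_0$ run at the constant speed $W/(n-(n-1)\tau)$ throughout the remaining $n-(n-1)\tau$ time units of $[0,n]$. This is feasible and costs $(n-1)\,\delta^{\alpha}/\tau^{\alpha-1}$ for the blockers and $W^{\alpha}/(n-(n-1)\tau)^{\alpha-1}$ for $J_0$. Fixing $\tau$ small and letting $\delta\to0$ makes the blocker cost vanish, so the optimal preemptive energy is at most $\bigl(1+o(1)\bigr)\,W^{\alpha}/n^{\alpha-1}$.

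Dividing the two bounds and letting $\epsilon,\delta\to0$ yields a ratio tending to $\bigl(n/(1+2\epsilon)\bigr)^{\alpha-1}\to n^{\alpha-1}$, as claimed. The step requiring the most care is the non-preemptive lower bound: I must argue that the blockers genuinely shatter the horizon for \emph{every} non-preemptive schedule, i.e. that no placement of the blocker execution instants within their windows can open a free interval longer than $1+2\epsilon$, while simultaneously checking that the blockers contribute negligibly to \emph{both} optima so that the gap is driven entirely by $J_0$. The remainder is a routine computation with the convexity of the power function.
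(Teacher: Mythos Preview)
Your argument is correct and rests on the same idea as the paper's: one long job together with $n-1$ short ``blocker'' jobs whose narrow windows chop the horizon into pieces of length roughly $1$, so that non-preemptively the long job is forced to run at speed $\Theta(n)$ times higher than in the preemptive optimum.

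The execution differs in two respects. First, the paper uses a concrete integer instance (unit-work blockers with unit-length windows $[2j-1,2j]$ and a long job of work $n$ on $[0,2n-1]$), so no limits are taken; you instead let the blocker work $\delta$ and half-window $\epsilon$ tend to $0$, which is slightly less clean but perfectly valid for establishing an $\Omega(n^{\alpha-1})$ gap. Second, and more importantly, your non-preemptive lower bound is argued properly: you show that \emph{every} feasible non-preemptive schedule must place $J_0$ in a free interval of length at most $1+2\epsilon$. The paper, by contrast, simply asserts a specific schedule is optimal for its instance without justification; your treatment of this step is in fact more rigorous than the paper's own.
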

\begin{proof}
Consider the instance consisting of a single processor, $n-1$ unit-work jobs, $J_1,J_2,\ldots,$ $J_{n-1}$, and the job $J_n$ of work $n$.
Each job $J_j$, $1 \leq j \leq n-1$, has release date $r_j=2j-1$ and deadline $d_j=2j$, while $r_n=0$ and $d_n=2n-1$
(see Figure~\ref{fig:pvsnp}).

The optimal preemptive schedule $\mathcal{S}_{pr}$ for this instance assigns to all jobs a speed equal to one.
Each job $J_j$, $1 \leq j \leq n-1$, is executed during its whole active interval,
while $J_n$ is executed during the remaining $n$ unit length intervals.
The total energy consumption of this schedule is $E(\mathcal{S}_{pr})=(n-1) \cdot 1^{\alpha} + n \cdot 1^{\alpha} = 2n-1$.

An optimal non-preemptive schedule $\mathcal{S}_{npr}$ for this instance assigns a speed $\frac{n+2}{3}$ to jobs $J_1$, $J_n$ and $J_2$
and schedules them non-preemptively in this order between time 1 and 4.
Moreover, in $\mathcal{S}_{npr}$ each job $J_j$, $3 \leq j \leq n-1$, is assigned a speed equal to one and it is executed during its whole active interval.
The total energy consumption of this schedule is $E(\mathcal{S}_{npr})= 3 \cdot (\frac{n+2}{3})^{\alpha} + (n-3) \cdot 1^{\alpha}$.

Therefore, we have
$\frac{E(\mathcal{S}_{npr})}{E(\mathcal{S}_{pr})} = \frac{3 \cdot (\frac{n+2}{3})^{\alpha} + (n-3) \cdot 1^{\alpha}}{2n-1} = O(n^{\alpha-1})$.
\end{proof}

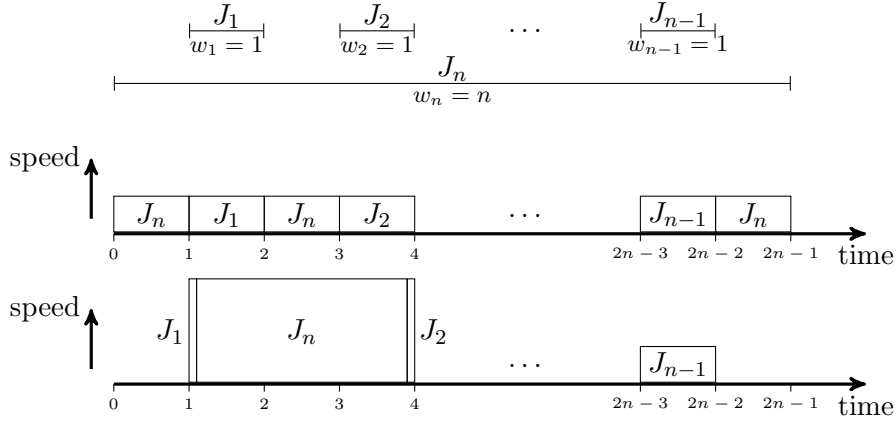
\begin{figure}[htb]
\begin{center}
\begin{tikzpicture}[
    scale=1,
    line/.style={|-|, >=stealth'},
    dline/.style={dashed, thick, >=stealth'},
    axis/.style={very thick, ->, >=stealth'}
    ]
\draw[line] (0.0,2.0) -- (9.0,2.0);
\node at (4.5,2.2) {$J_n$};
\node[font=\footnotesize] at (4.5,1.8) {$w_n=n$};
\draw[line] (1.0,2.7) -- (2.0,2.7);
\node at (1.5,2.9) {$J_1$};
\node[font=\footnotesize] at (1.5,2.5) {$w_1=1$};
\draw[line] (3.0,2.7) -- (4.0,2.7);
\node at (3.5,2.9) {$J_2$};
\node[font=\footnotesize] at (3.5,2.5) {$w_2=1$};
\node at (5.5,2.7) {$\dots$};
\draw[line] (7.0,2.7) -- (8.0,2.7);
\node at (7.5,2.9) {$J_{n-1}$};
\node[font=\footnotesize] at (7.5,2.5) {$w_{n-1}=1$};
\draw[axis] (0,0)  -- (10,0) node(xline)[below] {time};
\draw[axis] (-0.3,0.2) -- (-0.3,1.0) node(yline)[left] {speed};
\draw (0,0) -- (0,-0.1);
\draw (1,0) -- (1,-0.1);
\draw (2,0) -- (2,-0.1);
\draw (3,0) -- (3,-0.1);
\draw (4,0) -- (4,-0.1);
\draw (7,0) -- (7,-0.1);
\draw (8,0) -- (8,-0.1);
\draw (9,0) -- (9,-0.1);
\node[font=\tiny] at (0,-0.27) {0};
\node[font=\tiny] at (1,-0.27) {1};
\node[font=\tiny] at (2,-0.27) {2};
\node[font=\tiny] at (3,-0.27) {3};
\node[font=\tiny] at (4,-0.27) {4};
\node[font=\tiny] at (7,-0.27) {$2n-3$};
\node[font=\tiny] at (8,-0.27) {$2n-2$};
\node[font=\tiny] at (9,-0.27) {$2n-1$};
\draw (0,0.025) rectangle (1,0.5);
\node at (0.5,0.25) {$J_n$};
\draw (2,0.025) rectangle (3,0.5);
\node at (2.5,0.25) {$J_n$};
\draw (8,0.025) rectangle (9,0.5);
\node at (8.5,0.25) {$J_n$};
\draw (1.0,0.025) rectangle (2.0,0.5);
\node at (1.5,0.25) {$J_1$};
\draw (3.0,0.025) rectangle (4.0,0.5);
\node at (3.5,0.25) {$J_2$};
\draw (7.0,0.025) rectangle (8.0,0.5);
\node at (7.5,0.25) {$J_{n-1}$};
\node at (5.5,0.25) {$\dots$};
\draw[axis] (0,-2)  -- (10,-2) node(xline)[below] {time};
\draw[axis] (-0.3,-1.8) -- (-0.3,-1.0) node(yline)[left] {speed};
\draw (0,-2) -- (0,-2.1);
\draw (1,-2) -- (1,-2.1);
\draw (2,-2) -- (2,-2.1);
\draw (3,-2) -- (3,-2.1);
\draw (4,-2) -- (4,-2.1);
\draw (7,-2) -- (7,-2.1);
\draw (8,-2) -- (8,-2.1);
\draw (9,-2) -- (9,-2.1);
\node[font=\tiny] at (0,-2.27) {0};
\node[font=\tiny] at (1,-2.27) {1};
\node[font=\tiny] at (2,-2.27) {2};
\node[font=\tiny] at (3,-2.27) {3};
\node[font=\tiny] at (4,-2.27) {4};
\node[font=\tiny] at (7,-2.27) {$2n-3$};
\node[font=\tiny] at (8,-2.27) {$2n-2$};
\node[font=\tiny] at (9,-2.27) {$2n-1$};
\draw (1.1,-1.975) rectangle (3.9,-0.6);
\node at (2.5,-1.3) {$J_n$};
\draw (1.0,-1.975) rectangle (1.1,-0.6);
\node at (0.75,-1.3) {$J_1$};
\draw (3.9,-1.975) rectangle (4.0,-0.6);
\node at (4.25,-1.3) {$J_2$};
\draw (7.0,-1.975) rectangle (8.0,-1.5);
\node at (7.5,-1.75) {$J_{n-1}$};
\node at (5.5,-1.75) {$\dots$};
\end{tikzpicture}
\caption{An instance of $S1|r_j,d_j|E$ for which the ratio of the energy consumption in an optimal non-preemptive schedule
to the energy consumption in an optimal preemptive schedule is $O(n^{\alpha-1})$.}
\label{fig:pvsnp}
\end{center}
\end{figure}

In what follows, we show that for some particular families of instances,
for which the energy minimization multi-processor speed-scaling problem is known to be $\mathcal{NP}$-hard,
it is possible to use the cost of an optimal preemptive schedule as a lower bound of the energy consumption of
an optimal non-preemptive schedule in order to obtain good approximation ratios.

%
%

\section{Common Release Dates or Common Deadlines}
\label{section:crd-cd}

In this section we deal with $S|r_j=0,d_j|E$ and $S|r_j,d_j=d|E$, which are $\mathcal{NP}$-hard as generalizations of $S|r_j=0,d_j=d|E$.
In fact, we will present an approximation algorithm for $S|r_j=0,d_j|E$,
which achieves a better ratio than the known approximation algorithms presented in \cite{Albers07} and \cite{Greiner09} for any values of $\alpha$ and $m$.
We also describe how to adapt this algorithm to an algorithm for $S|r_j,d_j=d|E$ of the same approximation ratio.

\algo{CRD} takes as input an instance $\mathcal{I}$ of $S|r_j=0,d_j|E$ and creates first an optimal preemptive schedule for $\mathcal{I}$,
using one of the algorithms in \cite{Albers11,Angel12,Bingham08}.
The total execution time $e_j$ of each job $J_j \in \mathcal{J}$ in this preemptive schedule is used to define an appropriate processing time $p_j$ for $J_j$.
Then, the algorithm schedules non-preemptively the jobs using these processing times according to the Earliest Deadline First policy,
i.e., at every time that a machine becomes idle, the non-scheduled job with the minimum deadline is scheduled on it.
The choice of the values of the $p_j$'s has been made in such a way that the algorithm completes all the jobs before their deadlines.

\begin{algorithm}[h!]
\algo{CRD($\mathcal{I}$)}
\begin{algorithmic}[1]
\STATE Create an optimal preemptive schedule $\mathcal{S}_{pr}$ for $\mathcal{I}$;
\STATE Let $e_j$ be the total execution time of the job $J_j \in \mathcal{J}$, in $\mathcal{S}_{pr}$;
\STATE Schedule the jobs with the Earliest Deadline First (EDF) policy,
       using the appropriate speed such that the processing time of the job $J_j \in \mathcal{J}$, is equal to $p_j=e_j/(2-\frac{1}{m})$,
       obtaining the non-preemptive schedule $\mathcal{S}_{npr}$;
\RETURN $\mathcal{S}_{npr}$;
\end{algorithmic}
\end{algorithm}

\begin{theorem} \label{thm:crd}
\algo{CRD} is a $(2-\frac{1}{m})^{\alpha-1}$-approximation algorithm for $S|r_j=0,d_j|E$.
\end{theorem}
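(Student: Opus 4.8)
The plan is to establish two separate things: first, that the non-preemptive schedule $\mathcal{S}_{npr}$ produced by \algo{CRD} is \emph{feasible}, i.e. every job completes before its deadline; and second, that its energy exceeds that of the optimal preemptive schedule $\mathcal{S}_{pr}$ by at most the claimed factor. The energy bound is the easy half and follows directly from Proposition~\ref{prop:convexity}. Each job $J_j$ runs at constant speed $s_j = w_j/e_j$ in $\mathcal{S}_{pr}$ and at speed $s_j' = w_j/p_j = (2-\frac{1}{m})\,s_j$ in $\mathcal{S}_{npr}$, so Proposition~\ref{prop:convexity} with $\gamma = 2-\frac{1}{m}$ gives $E(\mathcal{S}_{npr},J_j) \le (2-\frac{1}{m})^{\alpha-1} E(\mathcal{S}_{pr},J_j)$ for every $j$. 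Summing over all jobs and using that an optimal preemptive schedule is a lower bound on the optimal non-preemptive one, $E(\mathcal{S}_{pr}) \le E(\mathcal{S}^*)$, delivers the ratio $(2-\frac{1}{m})^{\alpha-1}$.

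The substantive part is feasibility, and this is where I would concentrate the effort. I would fix an arbitrary job $J_k$ and bound its completion time $C_k = S_k + p_k$, where $S_k$ is its start time in $\mathcal{S}_{npr}$. The first key point exploits the common release date: since every job is available at time $0$, EDF never leaves a processor idle while an unscheduled job exists, so throughout $[0,S_k)$ all $m$ processors are continuously busy, and they run only jobs that EDF started before $J_k$. Because jobs are started in order of non-decreasing deadline when all are available at time $0$, every such job has deadline at most $d_k$. Counting processor-time over $[0,S_k)$ then yields $m\,S_k \le \sum_{j\neq k:\, d_j \le d_k} p_j$, hence
\[
C_k = S_k + p_k \le \frac{1}{m}\sum_{j:\, d_j \le d_k} p_j + \Bigl(1-\frac{1}{m}\Bigr) p_k.
\]

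To close the argument I would substitute $p_j = e_j/(2-\frac{1}{m})$ and bound the execution times via two structural properties of $\mathcal{S}_{pr}$: (a) all jobs with deadline at most $d_k$ are completed within $[0,d_k]$ on $m$ processors, so $\sum_{j:\, d_j \le d_k} e_j \le m\,d_k$; and (b) a job cannot be run in parallel, so $e_k \le d_k$. These collapse the right-hand side exactly to the deadline:
\[
C_k \le \frac{1}{2-\frac{1}{m}}\Bigl[\frac{1}{m}\,m\,d_k + \bigl(1-\tfrac{1}{m}\bigr) d_k\Bigr] = d_k,
\]
which is precisely why the factor $2-\frac{1}{m}$ is chosen. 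I expect the main obstacle to be making the EDF processor-counting argument watertight---justifying that all machines are busy on $[0,S_k)$ and that every job started before $J_k$ has deadline at most $d_k$---since once the two bounds (a) and (b) on the $e_j$ are in place, both the final substitution and the energy estimate are routine. Finally, the adaptation to $S|r_j,d_j=d|E$ follows by the time-reversal symmetry between common release dates and a common deadline, giving the same ratio.
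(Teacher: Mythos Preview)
Your proposal is correct and matches the paper's proof almost exactly: the same list-scheduling bound $C_k \le \tfrac{1}{m}\sum_{j<k} p_j + p_k$, the same two facts about the preemptive schedule ($\sum_{j:\,d_j\le d_k} e_j \le m\,d_k$ and $e_k \le d_k$), and the same application of Proposition~\ref{prop:convexity} for the energy bound. The paper casts the feasibility part as an induction, but the inductive hypothesis is never actually invoked, so your direct argument is if anything cleaner; the adaptation to $S|r_j,d_j=d|E$ via time reversal is likewise identical.
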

\begin{proof}
We first prove by induction that for the completion time $C_j$ of the job $J_j \in \mathcal{J}$, it holds that $C_j \leq d_j$.

For each job $J_j$, $1 \leq j \leq m$, it holds that $C_j = p_j = e_j/(2-\frac{1}{m}) \leq d_j$,
since $J_j$ is selected by the algorithm for $S|r_j,d_j,pmtn|E$ such that $\mathcal{S}_{pr}$ to be feasible, and hence $e_j \leq d_j$.
Assume that our hypothesis is true for all jobs in $\mathcal{J}_{k-1}=\{J_1,J_2,\ldots,J_{k-1}\}$.

For the job $J_k$, it holds that $C_k \leq \frac{\sum_{j=1}^{k-1} p_j}{m} + p_k = (\frac{\sum_{j=1}^{k-1} e_j}{m} + e_k) / (2-\frac{1}{m})$.
As $\mathcal{S}_{pr}$ is a feasible schedule and $d_k \geq d_j$ for each job $J_j \in \mathcal{J}_{k-1}$,
it holds that $e_k \leq d_k$ and $\sum_{j=1}^k e_j \leq m \cdot d_k$.
Hence, $C_k \leq (2-\frac{1}{m}) d_k / (2-\frac{1}{m}) = d_k$ and $\mathcal{S}_{npr}$ is a feasible schedule.

Moreover, when dividing the execution time of all jobs by $(2-\frac{1}{m})$,
at the same time the speed of each job is multiplied by the same factor.
Thus, using Proposition~\ref{prop:convexity} we have that
\begin{equation*}
E(\mathcal{S}_{npr}) \leq (2-\frac{1}{m})^{\alpha-1} E(\mathcal{S}_{pr}) \leq (2-\frac{1}{m})^{\alpha-1} E(\mathcal{S}^*)
\end{equation*}
since the energy consumed by the optimal preemptive schedule $\mathcal{S}_{pr}$
is a lower bound to the energy consumed by an optimal non-preemptive schedule $\mathcal{S}^*$ for the input instance $\mathcal{I}$.
\end{proof}

Next, we describe how to transform \algo{CRD} for $S|r_j=0,d_j|E$ into \algo{CD} for $S|r_j,d_j=d|E$.
In order to do this, we modify the Line~3 of \algo{CRD} such that to use the Latest Release Date First (LRDF) policy,
scheduling the jobs in a backward way starting from the deadline $d$.
Using a similar analysis, the following theorem holds.

\begin{theorem} \label{thm:cd}
\algo{CD} is a $(2-\frac{1}{m})^{\alpha-1}$-approximation algorithm for $S|r_j,d_j=d|E$.
\end{theorem}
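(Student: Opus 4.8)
The plan is to mirror the proof of Theorem~\ref{thm:crd} with the roles of release dates and deadlines exchanged, exploiting the symmetry between the two problems. Since in \algo{CD} all jobs share a common deadline $d$ but have distinct release dates, the natural dual of the Earliest Deadline First policy is to schedule jobs backward from $d$ using the Latest Release Date First (LRDF) rule. Concretely, I would reverse the time axis by setting $t \mapsto d - t$, which maps release dates to deadlines and vice versa, turning the common-deadline instance into an equivalent common-release-date instance for which Theorem~\ref{thm:crd} already applies. Under this transformation LRDF becomes exactly EDF, so the feasibility and energy arguments carry over verbatim.

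First I would establish feasibility by an induction symmetric to the one in Theorem~\ref{thm:crd}, this time proving that each job's \emph{start time} $S_j$ satisfies $S_j \geq r_j$ rather than that its completion time is bounded by its deadline. As before, set $p_j = e_j/(2-\frac{1}{m})$, where $e_j$ is the total execution time of $J_j$ in the optimal preemptive schedule $\mathcal{S}_{pr}$. Ordering the jobs by nonincreasing release date and scheduling them backward from $d$, the base case covers the last $m$ jobs placed (those with the largest release dates), for which $d - S_j = p_j = e_j/(2-\frac{1}{m}) \leq d - r_j$ follows from $e_j \leq d - r_j$, a consequence of the feasibility of $\mathcal{S}_{pr}$. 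For the inductive step on a job $J_k$ scheduled after its $k-1$ predecessors, I would bound $d - S_k \leq \frac{\sum_{j=1}^{k-1} p_j}{m} + p_k$ and use that $r_k \leq r_j$ for all already-placed jobs to conclude $\sum_{j=1}^{k} e_j \leq m(d - r_k)$, giving $d - S_k \leq (2-\frac{1}{m})(d-r_k)/(2-\frac{1}{m}) = d - r_k$, i.e. $S_k \geq r_k$.

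For the energy bound, nothing changes from Theorem~\ref{thm:crd}: shrinking each job's execution time by the factor $(2-\frac{1}{m})$ scales its speed up by the same factor, so Proposition~\ref{prop:convexity} with $\gamma = 2-\frac{1}{m}$ yields
\begin{equation*}
E(\mathcal{S}_{npr}) \leq (2-\tfrac{1}{m})^{\alpha-1} E(\mathcal{S}_{pr}) \leq (2-\tfrac{1}{m})^{\alpha-1} E(\mathcal{S}^*),
\end{equation*}
where the second inequality again uses that the optimal preemptive energy lower-bounds the optimal non-preemptive energy. I expect the only real point requiring care to be the feasibility induction, specifically getting the backward-scheduling bookkeeping right so that the inequality $\sum_{j=1}^{k} e_j \leq m(d-r_k)$ is correctly justified by the fact that every job counted has release date at least $r_k$ and must fit within $[r_k, d]$ in the feasible preemptive schedule. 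Everything else is a direct transcription of the common-release-date argument under the time-reversal symmetry.
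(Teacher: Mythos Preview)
Your proposal is correct and follows exactly the approach the paper intends: the paper simply states that \algo{CD} uses LRDF scheduling backward from $d$ and that ``using a similar analysis, the following theorem holds,'' and you have carried out that similar analysis in full, exploiting the time-reversal symmetry to dualize the feasibility induction (replacing $C_j \leq d_j$ by $S_j \geq r_j$) while leaving the energy bound unchanged. The only detail to double-check, which you identified, is the inequality $\sum_{j=1}^{k} e_j \leq m(d-r_k)$, and your justification---that all $k$ jobs have release date at least $r_k$ and so must fit in $[r_k,d]$ in the feasible preemptive schedule---is exactly right.
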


\section{Clique Instances}
\label{section:clique}

In this section, we present a constant factor approximation algorithm for $S|\mbox{clique}|E$.
Recall that this problem is $\mathcal{NP}$-hard as a generalization of $S|r_j=0,d_j=d|E$.
Moreover, in \cite{Antoniadis12} it is proved that $S1|\mbox{pure-laminar}|E$ is $\mathcal{NP}$-hard.
Hence, even $S1|\mbox{clique}|E$ is $\mathcal{NP}$-hard.

\algo{Cl} takes as input a clique instance $\mathcal{I}$ and creates first an optimal preemptive schedule,
using again one of the algorithms in \cite{Albers11,Angel12,Bingham08}.
Taking into account the execution times of jobs before and after the time $T=\min\{d_j, J_j \in \mathcal{J}\}$ in the preemptive schedule,
the algorithm splits the set of jobs into two subsets: the left $\mathbb{L}$ and the right $\mathbb{R}$.
The set $\mathbb{L}$ contains the jobs which have bigger execution time before $T$,
while the set $\mathbb{R}$ consists of the jobs of bigger execution time after $T$.
For the jobs in $\mathbb{L}$, we modify their deadlines to $T$ and we use \algo{CD}.
For the jobs in $\mathbb{R}$, we modify their release dates to $T$ and we use \algo{CRD}.
Then, \algo{Cl} returns the concatenation of the two schedules created by \algo{CD} and \algo{CRD}.

\begin{algorithm}[h!]
\algo{Cl($\mathcal{I}$)}
\begin{algorithmic}[1]
\STATE Create an optimal preemptive schedule $\mathcal{S}_{pr}$ for $\mathcal{I}$;
\STATE Let $e_{\ell j}$ and $e_{rj}$ be the total execution time of each job $J_j \in \mathcal{J}$, before and after, respectively,
       the time $T=\min\{d_j, J_j \in \mathcal{J}\}$ in $\mathcal{S}_{pr}$;
\STATE Let $\mathbb{L}=\{J_j: e_{\ell j} \geq e_{rj}\}$ and $\mathbb{R}=\{J_j: e_{\ell j} < e_{rj}\}$;
\STATE Consider the instance $\mathcal{I}_{\mathbb{L}}$ consisting of the jobs in $\mathbb{L}$,
       and each $J_j \in \mathbb{L}$ is characterized by a work $w_j(\mathcal{I}_{\mathbb{L}})=w_j(\mathcal{I})$,
       a release date $r_j(\mathcal{I}_{\mathbb{L}})=r_j(\mathcal{I})$,
       and a deadline $d_j(\mathcal{I}_{\mathbb{L}})=T$;
\STATE Run \algo{CD($\mathcal{I}_{\mathbb{L}})$} obtaining the non-preemptive schedule $\mathcal{S}_{npr}^{\mathbb{L}}$;
\STATE Consider the instance $\mathcal{I}_{\mathbb{R}}$ consisting of the jobs in $\mathbb{R}$,
       and each $J_j \in \mathbb{R}$ is characterized by a work $w_j(\mathcal{I}_{\mathbb{R}})=w_j(\mathcal{I})$,
       a release date $r_j(\mathcal{I}_{\mathbb{R}})=T$,
       and a deadline $d_j(\mathcal{I}_{\mathbb{R}})=d_j(\mathcal{I})$;
\STATE Run \algo{CRD($\mathcal{I}_{\mathbb{R}})$} obtaining the non-preemptive schedule $\mathcal{S}_{npr}^{\mathbb{R}}$;
\STATE Create the schedule $\mathcal{S}_{npr}$ by concatenating the schedules $\mathcal{S}_{npr}^{\mathbb{L}}$ and $\mathcal{S}_{npr}^{\mathbb{R}}$;
\RETURN $\mathcal{S}_{npr}$;
\end{algorithmic}
\end{algorithm}

\begin{theorem}\label{thm:clique}
\algo{Cl} is a $(2(2-\frac{1}{m}))^{\alpha-1}$-approximation algorithm for $S|\mbox{clique}|E$.
\end{theorem}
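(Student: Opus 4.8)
The plan is to analyze the two sub-schedules $\mathcal{S}_{npr}^{\mathbb{L}}$ and $\mathcal{S}_{npr}^{\mathbb{R}}$ separately, charging each against the corresponding part of the energy of the optimal preemptive schedule $\mathcal{S}_{pr}$, and then to combine them. First I would record two structural facts. Since $\mathcal{I}$ is a clique instance, every job is active at $T=\min\{d_j,J_j\in\mathcal{J}\}$, so $r_j \le T \le d_j$ for all $j$; hence setting the deadline of a job in $\mathbb{L}$ to $T$ and the release date of a job in $\mathbb{R}$ to $T$ only shrinks its active interval, and the schedules produced by \algo{CD} and \algo{CRD} are automatically feasible for the original intervals. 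Moreover, because the two modified instances have, respectively, common deadline $T$ and common release date $T$, the output of \algo{CD} lies entirely in $[\,0,T\,]$ while the output of \algo{CRD} lies entirely in $[\,T,\max_j d_j\,]$; the two occupy disjoint time windows, so their concatenation on the same $m$ processors is a feasible non-preemptive schedule for $\mathcal{I}$.

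The heart of the argument is a compression step. Recall that in $\mathcal{S}_{pr}$ each job $J_j$ runs at a single speed $s_j=w_j/e_j$, where $e_j=e_{\ell j}+e_{rj}$. For a job $J_j\in\mathbb{L}$ we have $e_{\ell j}\ge e_{rj}$, hence $e_{\ell j}\ge e_j/2$. Consider the preemptive schedule $\mathcal{S}'_{\mathbb{L}}$ for $\mathcal{I}_{\mathbb{L}}$ obtained by keeping, for each $J_j\in\mathbb{L}$, exactly the time slots during which $J_j$ ran before $T$ in $\mathcal{S}_{pr}$, but raising its speed to $w_j/e_{\ell j}$ so that its whole work $w_j$ is completed within these slots. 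These slots lie in $[\,r_j,T\,]$ and, at every instant, are occupied by at most $m$ jobs (since this was already the case in $\mathcal{S}_{pr}$); therefore $\mathcal{S}'_{\mathbb{L}}$ is a feasible preemptive schedule for $\mathcal{I}_{\mathbb{L}}$. Its new speed satisfies $w_j/e_{\ell j}=s_j\,(e_j/e_{\ell j})\le 2s_j$, so Proposition~\ref{prop:convexity} with $\gamma=2$ gives $E(\mathcal{S}'_{\mathbb{L}},j)\le 2^{\alpha-1}E(\mathcal{S}_{pr},j)$ for each $J_j\in\mathbb{L}$, and summing yields $E(\mathcal{S}'_{\mathbb{L}})\le 2^{\alpha-1}\sum_{J_j\in\mathbb{L}}E(\mathcal{S}_{pr},j)$. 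The symmetric construction, using the after-$T$ slots and the bound $e_{rj}\ge e_j/2$ for $J_j\in\mathbb{R}$, produces a feasible preemptive schedule $\mathcal{S}'_{\mathbb{R}}$ for $\mathcal{I}_{\mathbb{R}}$ with $E(\mathcal{S}'_{\mathbb{R}})\le 2^{\alpha-1}\sum_{J_j\in\mathbb{R}}E(\mathcal{S}_{pr},j)$.

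It then remains to assemble the bounds. Since \algo{CD} first builds an optimal preemptive schedule for $\mathcal{I}_{\mathbb{L}}$, whose energy is at most $E(\mathcal{S}'_{\mathbb{L}})$, the proof of Theorem~\ref{thm:cd} gives $E(\mathcal{S}_{npr}^{\mathbb{L}})\le(2-\tfrac1m)^{\alpha-1}E(\mathcal{S}'_{\mathbb{L}})\le \bigl(2(2-\tfrac1m)\bigr)^{\alpha-1}\sum_{J_j\in\mathbb{L}}E(\mathcal{S}_{pr},j)$, and likewise for $\mathbb{R}$ via Theorem~\ref{thm:crd}. Adding the two inequalities and using $\mathbb{L}\cup\mathbb{R}=\mathcal{J}$ gives $E(\mathcal{S}_{npr})\le\bigl(2(2-\tfrac1m)\bigr)^{\alpha-1}E(\mathcal{S}_{pr})$, and finally $E(\mathcal{S}_{pr})\le E(\mathcal{S}^*)$ because the optimal preemptive energy lower-bounds the optimal non-preemptive energy. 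I expect the main obstacle to be the compression step: one must check carefully that reusing the before-/after-$T$ slots keeps the per-instant load at most $m$, so that the compressed schedule is genuinely feasible with $m$ processors, and that the factor-$2$ speed blow-up is exactly what the membership criterion $e_{\ell j}\ge e_{rj}$ (respectively $e_{rj}>e_{\ell j}$) guarantees; the rest is bookkeeping with Proposition~\ref{prop:convexity} and the already-proven guarantees of \algo{CD} and \algo{CRD}.
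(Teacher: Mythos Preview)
Your proposal is correct and follows essentially the same approach as the paper: build feasible preemptive schedules for $\mathcal{I}_{\mathbb{L}}$ and $\mathcal{I}_{\mathbb{R}}$ by compressing each job of $\mathcal{S}_{pr}$ into its before-$T$ (respectively after-$T$) slots at a factor-$2$ speed-up, compare these to the optimal preemptive schedules that \algo{CD}/\algo{CRD} compute, and then apply Theorems~\ref{thm:cd} and~\ref{thm:crd}. The only cosmetic differences are that the paper doubles every job's speed uniformly (yielding exactly $2^{\alpha-1}E(\mathcal{S}_{pr})$) whereas you use the tighter per-job factor $e_j/e_{\ell j}\le 2$, and that you spell out the feasibility of the concatenation, which the paper leaves implicit.
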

\begin{proof}
Let $\mathcal{S}_{pr}^{\mathbb{L}}$ and $\mathcal{S}_{pr}^{\mathbb{R}}$ be the preemptive schedules
created in Line~1 of \algo{CD} and \algo{CRD}, respectively.
By Theorems~\ref{thm:cd} and \ref{thm:crd}, respectively, it holds that
\begin{equation}\label{eqn:clique1}
E(\mathcal{S}_{npr}) = E(\mathcal{S}_{npr}^{\mathbb{L}}) + E(\mathcal{S}_{npr}^{\mathbb{R}})
                      \leq (2-\frac{1}{m})^{\alpha-1} (E(\mathcal{S}_{pr}^{\mathbb{L}})+E(\mathcal{S}_{pr}^{\mathbb{R}}))
\end{equation}

Consider, now, the preemptive schedule $\mathcal{S}$ that occurs from $\mathcal{S}_{pr}$
if we schedule the whole work of each job $J_j \in \mathbb{L}$ (resp. $J_j \in \mathbb{R}$) before (resp. after) the time $T$
at the intervals in which $J_j$ is already executed in $\mathcal{S}_{pr}$.
By construction for each job $J_j \in \mathbb{L}$ (resp. $\mathbb{R}$) it holds that $e_{\ell j} \geq e_{rj}$ (resp. $e_{\ell j} < e_{rj}$).
Hence, in order $\mathcal{S}$ to be feasible, we have just to double the speed of each job $J_j \in \mathcal{J}$,
and thus by Proposition~\ref{prop:convexity} we have $E(\mathcal{S},J_j)=2^{\alpha-1}E(\mathcal{S}_{pr},J_j)$.
In total we get that $E(\mathcal{S})=2^{\alpha-1}E(\mathcal{S}_{pr})$.

Let $\mathcal{S}^{\mathbb{L}}$ and $\mathcal{S}^{\mathbb{R}}$ be the parts of $\mathcal{S}$ before and after $T$, respectively,
that is $E(\mathcal{S})=E(\mathcal{S}^{\mathbb{L}})+E(\mathcal{S}^{\mathbb{R}})$.
The schedules $\mathcal{S}^{\mathbb{L}}$ and $\mathcal{S}_{pr}^{\mathbb{L}}$ concern the same instance,
and since $\mathcal{S}_{pr}^{\mathbb{L}}$ is the optimal one, it holds that $E(\mathcal{S}_{pr}^{\mathbb{L}}) \leq E(\mathcal{S}^{\mathbb{L}})$.
Similarly, it holds that $E(\mathcal{S}_{pr}^{\mathbb{R}}) \leq E(\mathcal{S}^{\mathbb{R}})$.
Therefore, we get
\begin{equation}\label{eqn:clique2}
E(\mathcal{S}_{pr}^{\mathbb{L}})+E(\mathcal{S}_{pr}^{\mathbb{R}}) \leq
E(\mathcal{S}^{\mathbb{L}})+E(\mathcal{S}^{\mathbb{R}}) =
E(\mathcal{S}) = 2^{\alpha-1} E(\mathcal{S}_{pr})
\end{equation}
By combining Equations~(\ref{eqn:clique1}) and~(\ref{eqn:clique2})
and taking into account that $E(\mathcal{S}_{pr})$ is a lower bound to the energy consumed by an optimal non-preemptive schedule $\mathcal{S}^*$ for $\mathcal{I}$,
i.e., $E(\mathcal{S}_{pr}) \leq E(\mathcal{S}^*)$, the theorem follows.
\end{proof}

\begin{corollary}
\algo{Cl} is a $(2(2-\frac{1}{m}))^{\alpha-1}$-approximation algorithm for $S|\mbox{pure-laminar}|E$.
\end{corollary}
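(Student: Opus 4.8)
The plan is to obtain the corollary directly from Theorem~\ref{thm:clique} by observing that every pure-laminar instance is itself a clique instance, so that the approximation guarantee established for clique instances applies verbatim. Recall that in a pure-laminar instance, for any two jobs $J_j$ and $J_{j'}$ with $r_j \leq r_{j'}$ we have $d_j \geq d_{j'}$, whereas the clique property requires the weaker condition $d_j \geq r_{j'}$. Hence the only substantive step is to verify this set inclusion between the two families.

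To establish the inclusion, I would fix any two jobs $J_j$ and $J_{j'}$ with $r_j \leq r_{j'}$ in a pure-laminar instance. The pure-laminar condition yields $d_j \geq d_{j'}$, and feasibility of the instance guarantees $d_{j'} \geq r_{j'}$, since a job's deadline cannot precede its release date. Chaining these gives $d_j \geq d_{j'} \geq r_{j'}$, which is exactly the defining clique condition. Thus every pure-laminar instance is a clique instance, confirming the remark in the introduction that pure-laminar instances form a special case of clique instances.

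Given this inclusion, the final step is immediate. \algo{Cl} takes an arbitrary clique instance as input, and Theorem~\ref{thm:clique} guarantees that on any such instance it returns a feasible non-preemptive schedule whose energy is within a factor $(2(2-\frac{1}{m}))^{\alpha-1}$ of the optimum. Since a pure-laminar instance is one particular clique instance, running \algo{Cl} on it inherits the same feasibility and the same approximation ratio, with no new estimates required.

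I do not anticipate a genuine obstacle here; the argument is a direct invocation of the preceding theorem. The only point deserving care is the appeal to the feasibility inequality $d_{j'} \geq r_{j'}$ used in the inclusion, which is a standing assumption on all instances considered (otherwise no feasible schedule exists at all). Everything else reduces to the already-proved clique result.
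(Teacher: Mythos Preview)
Your proposal is correct and follows exactly the approach the paper intends: the corollary is stated without proof, relying on the remark in the introduction that pure-laminar instances are a special case of clique instances, so Theorem~\ref{thm:clique} applies directly. Your verification of the inclusion (using $d_j \geq d_{j'} \geq r_{j'}$) simply spells out what the paper leaves implicit.
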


\section{Agreeable Instances}
\label{section:agreeable}

In this section we deal with agreeable instances and
for $\alpha > m$ we improve the known $\alpha^{\alpha}2^{4\alpha}$-approximation algorithm
presented in \cite{Albers07} for $S|\mbox{agreeable}|E$, by proposing a $(4(2-\frac{1}{m}))^{\alpha-1}$-approximation algorithm.
Recall that, $S|\mbox{agreeable}|E$ is known to be $\mathcal{NP}$-hard as a generalization of $S|r_j=0,d_j=d|E$.
On the other hand, $S1|\mbox{agreeable}|E$ can be solved in polynomial time \cite{Antoniadis12,Yao95}.
In order to handle an agreeable instance $\mathcal{I}$, we will first describe how to partition it into clique subinstances,
by splitting the set of jobs $\mathcal{J}$ into subsets.

To begin with this partition, we define $T_1$ to be the smallest deadline of any job in $\mathcal{J}$, i.e., $T_1=\min\{d_j : J_j \in \mathcal{J}\}$.
Let $\mathcal{J}_1 \subseteq \mathcal{J}$ be the set of jobs which are released before $T_1$, i.e., $\mathcal{J}_1=\{J_j \in \mathcal{J} : r_j \leq T_1\}$.
Next, we set $T_2=\min\{d_j : J_j \in \mathcal{J} \setminus \mathcal{J}_1\}$ and $\mathcal{J}_2=\{J_j \in \mathcal{J} : T_1 < r_j \leq T_2\}$,
and we continue this process until all jobs are assigned into a subset of jobs (see Figure~\ref{fig:agreeable}).
Let $k$ be the number of subsets of jobs that have been created.

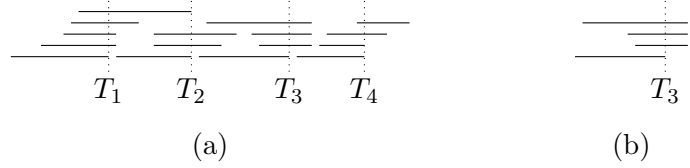
\begin{figure}[htb]
\begin{center}
\begin{tikzpicture}
\draw (0.0,0.00) -- (1.3,0.00);
\draw (0.4,0.15) -- (1.4,0.15);
\draw (0.7,0.30) -- (1.4,0.30);
\draw (0.8,0.45) -- (1.7,0.45);
\draw (0.9,0.60) -- (2.4,0.60);
\draw (1.4,0.00) -- (2.4,0.00);
\draw (1.9,0.15) -- (2.8,0.15);
\draw (1.9,0.30) -- (3.0,0.30);
\draw (2.5,0.00) -- (3.7,0.00);
\draw (3.3,0.15) -- (4.0,0.15);
\draw (3.2,0.30) -- (4.0,0.30);
\draw (2.6,0.45) -- (4.0,0.45);
\draw (3.8,0.00) -- (4.7,0.00);
\draw (4.1,0.15) -- (4.7,0.15);
\draw (4.2,0.30) -- (5.0,0.30);
\draw (4.6,0.45) -- (5.3,0.45);
\draw[dotted] (1.3,-0.2) -- (1.3,0.8);
\node at (1.3,-0.45) {$T_1$};
\draw[dotted] (2.4,-0.2) -- (2.4,0.8);
\node at (2.4,-0.45) {$T_2$};
\draw[dotted] (3.7,-0.2) -- (3.7,0.8);
\node at (3.7,-0.45) {$T_3$};
\draw[dotted] (4.7,-0.2) -- (4.7,0.8);
\node at (4.7,-0.45) {$T_4$};
\node at (2.65,-1.2) {(a)};
\draw (7.5,0.00) -- (8.7,0.00);
\draw (8.3,0.15) -- (9.0,0.15);
\draw (8.2,0.30) -- (9.0,0.30);
\draw (7.6,0.45) -- (9.0,0.45);
\draw[dotted] (8.7,-0.2) -- (8.7,0.8);
\node at (8.7,-0.45) {$T_3$};
\node at (8.25,-1.2) {(b)};
\end{tikzpicture}
\caption{(a) The partition of jobs. (b) The subset of jobs $\mathcal{J}_3$.}
\label{fig:agreeable}
\end{center}
\end{figure}

In the above partition, the subsets of jobs are pairwise disjoint.
However, the two clique subinstances induced by two subsets of jobs may have time interference.
In other words, there may exist two jobs $J_j \in \mathcal{J}_{\ell}$ and $J_{j'} \in \mathcal{J}_{\ell'}$,
where $0 \leq \ell < \ell' \leq k$, such that $d_j \geq r_{j'}$.
Nevertheless, the key observation here is that in the above partition into clique subinstances
only two consecutive subsets of jobs may have time interference.
In fact a stronger property holds: for each $\ell$, $1 \leq \ell \leq k-1$, and each job $J_j \in \mathcal{J}_{\ell}$,
the active interval of $J_j$ does not contain $T_{\ell+1}$.
To see this, given a partition into the subsets of jobs $\mathcal{J}_1,\mathcal{J}_2,\ldots,\mathcal{J}_k$ and the corresponding times $T_1,T_2,\ldots,T_k$,
assume for contradiction that there is a job $J_j \in \mathcal{J}_{\ell}$ whose active interval contains $T_{\ell+1}$.
Hence, it holds that $r_j \leq T_{\ell}$ and $d_j > T_{\ell+1}$.
Consider, now, the job $J_{j'} \in \mathcal{J}_{\ell+1}$ whose deadline defines $T_{\ell+1}$.
For $J_{j'}$ it holds that $r_{j'} > T_{\ell}$ and $d_{j'}=T_{\ell+1}$.
Thus, we have that $r_j < r_{j'}$ and $d_j > d_{j'}$, which is a contradiction that our instance is agreeable.

Next, we propose \algo{Agr} which initially defines the times $T_1,T_2,\ldots,T_k$ and
the corresponding subsets of jobs $\mathcal{J}_1,\mathcal{J}_2,\ldots,\mathcal{J}_k$.
Then, in order to separate the intervals of two consecutive subsets of jobs,
our algorithm decreases the active intervals of all jobs by half.
This will ensure the feasibility of the schedule for the whole instance $\mathcal{I}$.
For each clique subinstance, \algo{Cl} is called and it gives a non-preemptive schedule.
Our algorithm returns the concatenation of the schedules for the clique subinstances.

\begin{algorithm}[h!]
\algo{Agr}
\begin{algorithmic}[1]
\STATE Find the partition into the subsets of jobs $\mathcal{J}_1,\mathcal{J}_2,\ldots,\mathcal{J}_k$ and the corresponding times $T_1,T_2,\ldots,T_k$;
\FOR {$\ell=1$ to $k$}
\STATE Consider the clique instance $\mathcal{I}_{\ell}$ consisting of the jobs in $\mathcal{J}_{\ell}$,
       and each $J_j \in \mathcal{J}_{\ell}$ is characterized by a work $w_j(\mathcal{I}_{\ell})=w_j(\mathcal{I})$,
       a release date $r_j(\mathcal{I}_{\ell})=r_j(\mathcal{I})+\frac{T_{\ell}-r_j(\mathcal{I})}{2}$,
       and a deadline $d_j(\mathcal{I}_{\ell})=d_j(\mathcal{I})-\frac{d_j(\mathcal{I})-T_{\ell}}{2}$;
\STATE Run \algo{Cl($\mathcal{I}_{\ell}$)} obtaining the non-preemptive schedule $\mathcal{S}_{npr}^{(\ell)}$;
\ENDFOR
\STATE Create the schedule $\mathcal{S}_{npr}$
       by concatenating the schedules $\mathcal{S}_{npr}^{(1)},\mathcal{S}_{npr}^{(2)},\ldots,\mathcal{S}_{npr}^{(k)}$;
\RETURN $\mathcal{S}_{npr}$;
\end{algorithmic}
\end{algorithm}

The following proposition deals with two clique instances consisting of the same set of jobs with different active intervals
(as in Line~3 of \algo{Agr}).
Note that the proposition holds both for preemptive and non-preemptive schedules.

\begin{proposition} \label{prop:clique}
Consider a clique instance $\mathcal{I}_1$ consisting of a set of jobs $\mathcal{J}_1$ and let $T=\min\{d_j:~J_j \in \mathcal{J}_1\}$.
Let $\mathcal{I}_2$ be an instance consisting of the set of jobs $\mathcal{J}_2=\mathcal{J}_1$
and for each $J_j \in \mathcal{J}_2$ we have that $w_j(\mathcal{I}_2)=w_j(\mathcal{I}_1)$,
$r_j(\mathcal{I}_2)=r_j(\mathcal{I}_1)+\frac{T-r_j(\mathcal{I}_1)}{2}$ and $d_j(\mathcal{I}_2)=d_j(\mathcal{I}_1)-\frac{d_j(\mathcal{I}_1)-T}{2}$.
For two optimal schedules $\mathcal{S}_1$ and $\mathcal{S}_2$ for $\mathcal{I}_1$ and $\mathcal{I}_2$, respectively,
it holds that $E(\mathcal{S}_2) \leq 2^{\alpha-1} E(\mathcal{S}_1)$.
\end{proposition}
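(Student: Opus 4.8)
The plan is to take an optimal schedule $\mathcal{S}_1$ for $\mathcal{I}_1$ and build from it a \emph{feasible} schedule $\mathcal{S}'$ for $\mathcal{I}_2$ whose energy exceeds that of $\mathcal{S}_1$ by at most a factor $2^{\alpha-1}$. Since $\mathcal{S}_2$ is an optimal schedule for $\mathcal{I}_2$, this immediately yields $E(\mathcal{S}_2) \leq E(\mathcal{S}') \leq 2^{\alpha-1} E(\mathcal{S}_1)$, which is the desired inequality.

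The transformation I would use is the affine time-compression toward $T$ given by $\phi(t) = \frac{t+T}{2} = T + \frac{t-T}{2}$. This map fixes $T$, is strictly increasing, and halves all lengths; crucially it sends the active interval $[r_j(\mathcal{I}_1), d_j(\mathcal{I}_1)]$ exactly onto $[\frac{r_j(\mathcal{I}_1)+T}{2}, \frac{d_j(\mathcal{I}_1)+T}{2}] = [r_j(\mathcal{I}_2), d_j(\mathcal{I}_2)]$, the active interval of $J_j$ in $\mathcal{I}_2$. I define $\mathcal{S}'$ by the rule: whenever $\mathcal{S}_1$ runs $J_j$ on a processor at time $t$, $\mathcal{S}'$ runs $J_j$ on the same processor at time $\phi(t)$ at twice the speed.

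Next I would verify that $\mathcal{S}'$ is feasible for $\mathcal{I}_2$. Since in an optimal schedule every job runs at a constant speed, let $s_j$ and $e_j$ be the speed and the total execution time of $J_j$ in $\mathcal{S}_1$, so that $w_j = s_j e_j$; in $\mathcal{S}'$ the job runs at speed $2 s_j$ for a total time $e_j/2$, completing exactly $2 s_j \cdot (e_j/2) = w_j$ units of work. All of this processing lies inside $[r_j(\mathcal{I}_2), d_j(\mathcal{I}_2)]$ because $\phi$ maps the active interval of $\mathcal{I}_1$ onto that of $\mathcal{I}_2$. Finally, because $\phi$ is an order-preserving bijection it neither creates nor destroys temporal overlaps, so two pieces that did not collide on a processor in $\mathcal{S}_1$ do not collide in $\mathcal{S}'$; the same argument shows that a non-preemptive (contiguous) execution stays contiguous, so the construction respects both the preemptive and the non-preemptive setting, exactly as the proposition requires.

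For the energy I would apply Proposition~\ref{prop:convexity} job by job with $\gamma = 2$: since the speed of $J_j$ in $\mathcal{S}'$ is exactly twice its speed in $\mathcal{S}_1$, it follows that $E(\mathcal{S}', J_j) \leq 2^{\alpha-1} E(\mathcal{S}_1, J_j)$, and summing over all jobs gives $E(\mathcal{S}') \leq 2^{\alpha-1} E(\mathcal{S}_1)$, which combined with the optimality of $\mathcal{S}_2$ closes the argument. The one delicate point is the feasibility check, namely confirming that a single global time-compression simultaneously respects every job's new release date and deadline while preserving the machine assignment and the preemptive/non-preemptive character; the energy estimate itself is then an immediate consequence of the convexity proposition.
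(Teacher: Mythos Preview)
Your proposal is correct and follows essentially the same approach as the paper: the paper also builds a feasible schedule for $\mathcal{I}_2$ from $\mathcal{S}_1$ by halving every execution interval and ``compacting'' the schedule toward the fixed point $T$, doubling all speeds, and then invoking Proposition~\ref{prop:convexity} together with the optimality of $\mathcal{S}_2$. Your use of the explicit affine map $\phi(t)=\tfrac{t+T}{2}$ simply makes the paper's compaction step precise and adds the observation that the construction preserves non-preemption as well as preemption.
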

\begin{proof}
Note first that the active interval of each job $J_j \in \mathcal{J}_1$ from $d_j(\mathcal{I}_1)-r_j(\mathcal{I}_1)$ in $\mathcal{I}_1$
becomes $d_j(\mathcal{I}_2)-r_j(\mathcal{I}_2)=\frac{1}{2}(d_j(\mathcal{I}_1)-r_j(\mathcal{I}_1))$ in $\mathcal{I}_2$.

Given an optimal schedule $\mathcal{S}_1$ for $\mathcal{I}_1$, we can obtain a feasible schedule $\mathcal{S}_2'$ for $\mathcal{I}_2$
if we decrease the execution time of each job $J_j \in \mathcal{J}_1$ by a factor of $\frac{1}{2}$
and then compact the schedule keeping fixed the time $T$.
In order to do this, we have to increase the speed of $J_j$ by a factor of $2$.
Hence, by Proposition~\ref{prop:convexity} it holds that $E(\mathcal{S}_2') = 2^{\alpha-1} E(\mathcal{S}_1)$.
As $\mathcal{S}_2$ is the optimal schedule for $\mathcal{I}_2$, the proposition follows.
\end{proof}

\begin{theorem}\label{thm:agreeable}
\algo{Agr}
is a $(4(2-\frac{1}{m}))^{\alpha-1}$-approximation algorithm for $S|\mbox{agreeable}|E$.
\end{theorem}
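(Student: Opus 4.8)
The plan is to verify two properties of the schedule $\mathcal{S}_{npr}$ produced by \algo{Agr}: feasibility and the claimed energy guarantee. The energy guarantee will follow by composing three facts that are already available, namely the approximation ratio of \algo{Cl} (Theorem~\ref{thm:clique}), the cost of halving the active intervals (Proposition~\ref{prop:clique}), and the observation that an optimal preemptive schedule for the whole instance both lower-bounds $E(\mathcal{S}^*)$ and decomposes additively over the partition $\mathcal{J}_1,\ldots,\mathcal{J}_k$.

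First I would check feasibility. For each $J_j \in \mathcal{J}_\ell$ we have $r_j \leq T_\ell \leq d_j$, so the modified dates $r_j(\mathcal{I}_\ell)=\frac{r_j+T_\ell}{2}$ and $d_j(\mathcal{I}_\ell)=\frac{d_j+T_\ell}{2}$ satisfy $r_j \leq r_j(\mathcal{I}_\ell) \leq T_\ell \leq d_j(\mathcal{I}_\ell) \leq d_j$; thus each $\mathcal{I}_\ell$ is still a clique instance (with common active time $T_\ell$) and, crucially, the halved active interval of every job is contained in its original one, so no original release date or deadline is violated inside $\mathcal{S}_{npr}^{(\ell)}$. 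It then remains to argue that the windows of consecutive subinstances do not overlap. Using the structural property established before the algorithm---that for $J_j \in \mathcal{J}_\ell$ the active interval does not contain $T_{\ell+1}$, whence $d_j < T_{\ell+1}$---the latest modified deadline in $\mathcal{I}_\ell$ is at most $\frac{T_\ell+T_{\ell+1}}{2}$, while for any $J_{j'} \in \mathcal{J}_{\ell+1}$ we have $r_{j'} > T_\ell$ and hence the earliest modified release date in $\mathcal{I}_{\ell+1}$ exceeds $\frac{T_\ell+T_{\ell+1}}{2}$. Consequently the time spans used by $\mathcal{S}_{npr}^{(\ell)}$ and $\mathcal{S}_{npr}^{(\ell+1)}$ are disjoint, so their concatenation is feasible.

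For the energy bound, let $\mathcal{S}_{pr}$ be an optimal preemptive schedule for $\mathcal{I}$. Fix $\ell$ and let $\mathcal{O}_\ell$ and $\mathcal{O}_\ell'$ denote optimal preemptive schedules for $\mathcal{I}_\ell$ (the halved clique instance) and for the clique instance on $\mathcal{J}_\ell$ carrying the original active intervals, respectively. Theorem~\ref{thm:clique} gives $E(\mathcal{S}_{npr}^{(\ell)}) \leq (2(2-\frac{1}{m}))^{\alpha-1} E(\mathcal{O}_\ell)$; Proposition~\ref{prop:clique}, applied in its preemptive form, gives $E(\mathcal{O}_\ell) \leq 2^{\alpha-1} E(\mathcal{O}_\ell')$; and restricting $\mathcal{S}_{pr}$ to the jobs of $\mathcal{J}_\ell$ yields a feasible preemptive schedule for the original clique instance, so $E(\mathcal{O}_\ell') \leq \sum_{J_j \in \mathcal{J}_\ell} E(\mathcal{S}_{pr},J_j)$. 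Summing over $\ell$ and using that the $\mathcal{J}_\ell$ partition $\mathcal{J}$,
\begin{equation*}
E(\mathcal{S}_{npr}) = \sum_{\ell=1}^k E(\mathcal{S}_{npr}^{(\ell)}) \leq \left(2\left(2-\tfrac{1}{m}\right)\right)^{\alpha-1} 2^{\alpha-1} \sum_{\ell=1}^k \sum_{J_j \in \mathcal{J}_\ell} E(\mathcal{S}_{pr},J_j) = \left(4\left(2-\tfrac{1}{m}\right)\right)^{\alpha-1} E(\mathcal{S}_{pr}),
\end{equation*}
and the theorem follows from $E(\mathcal{S}_{pr}) \leq E(\mathcal{S}^*)$.

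The routine part is the energy chain above, which is merely a composition of the three inequalities. I expect the main obstacle to be the feasibility argument: one must confirm both that halving each active interval toward $T_\ell$ keeps it inside the original window (so the concatenated schedule never misses an original release date or deadline) and that this halving separates consecutive subinstances in time. The latter is exactly where the agreeability of $\mathcal{I}$ enters, through the property that no active interval of a job in $\mathcal{J}_\ell$ contains $T_{\ell+1}$; without it the windows of $\mathcal{I}_\ell$ and $\mathcal{I}_{\ell+1}$ could overlap and the concatenation would be infeasible.
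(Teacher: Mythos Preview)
Your proof is correct and follows essentially the same route as the paper: you compose Theorem~\ref{thm:clique}, Proposition~\ref{prop:clique}, and the lower bound $\sum_\ell E(\mathcal{O}_\ell') \leq E(\mathcal{S}_{pr}) \leq E(\mathcal{S}^*)$ for the energy guarantee, and for feasibility you separate consecutive subinstances at the midpoint $\frac{T_\ell+T_{\ell+1}}{2}$ exactly as the paper does. Your extra check that each halved window lies inside the original one, and your explicit justification of $E(\mathcal{O}_\ell') \leq \sum_{J_j \in \mathcal{J}_\ell} E(\mathcal{S}_{pr},J_j)$ via restriction, are welcome elaborations but do not change the argument.
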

\begin{proof}
We first deal with the approximation ratio of the algorithm.
Let $\mathcal{S}_{pr}^{(1)},\mathcal{S}_{pr}^{(2)},\ldots,\mathcal{S}_{pr}^{(k)}$ be the optimal preemptive schedules created in Line~1
of \algo{Cl} with input the instances $\mathcal{I}_1,\mathcal{I}_2,\ldots,\mathcal{I}_k$, respectively.
By Theorem~\ref{thm:clique}, for each $\ell$, $1 \leq \ell \leq k$,
it holds that $E(\mathcal{S}_{npr}^{(\ell)}) \leq (2(2-\frac{1}{m}))^{\alpha-1} E(\mathcal{S}_{pr}^{(\ell)})$, and hence
\begin{equation} \label{eqn:small1}
E(\mathcal{S}_{npr}) = \sum_{\ell=1}^k E(\mathcal{S}_{npr}^{(\ell)}) \leq (2(2-\frac{1}{m}))^{\alpha-1} \sum_{\ell=1}^k E(\mathcal{S}_{pr}^{(\ell)})
\end{equation}

For each $\ell$, $1 \leq \ell \leq k$, let $\mathcal{S}^{(\ell)}$ be the optimal preemptive schedule for the jobs in $\mathcal{J}_{\ell}$
subject to their original release dates and deadlines in $\mathcal{I}$.
By Proposition~\ref{prop:clique} we have that $E(\mathcal{S}_{pr}^{(\ell)}) \leq 2^{\alpha-1} E(\mathcal{S}^{(\ell)})$, and hence
\begin{equation} \label{eqn:small2}
\sum_{\ell=1}^k E(\mathcal{S}_{pr}^{(\ell)}) \leq 2^{\alpha-1} \sum_{\ell=1}^k E(\mathcal{S}^{(\ell)})
\end{equation}

Let $\mathcal{S}_{pr}$ be an optimal preemptive schedule for the whole instance $\mathcal{I}$.
Note that, $\sum_{\ell=1}^k E(\mathcal{S}^{(\ell)})$ is a lower bound to $E(\mathcal{S}_{pr})$,
since all jobs in $\bigcup_{\ell=1}^k \mathcal{S}^{(\ell)}$ have their original release dates and deadlines,
and each of $\mathcal{S}^{(\ell)}$ is an optimal preemptive schedule.
Hence, by combining the Equations~(\ref{eqn:small1}) and~(\ref{eqn:small2}), the approximation ratio achieved by our algorithm follows.

In order to show the feasibility of the schedule $\mathcal{S}_{npr}$ created by \algo{Agr}, it suffices to show that
for each $\ell$, $1 \leq \ell \leq k-1$, and for any two jobs $J_j \in \mathcal{J}_{\ell}$ and $J_{j'} \in \mathcal{J}_{\ell+1}$
it holds that $d_j(\mathcal{I}_{\ell}) < r_{j'}(\mathcal{I}_{\ell+1})$.
More intuitively, we want to show that the active intervals of any two jobs belonging to two consecutive (w.r.t. time) instances does not intersect.

By construction, it holds that $d_j(\mathcal{I}) \leq T_{\ell+1}$ and $r_{j'}(\mathcal{I}) > T_{\ell}$.
Moreover, we have that $d_j(\mathcal{I}_{\ell}) = d_j(\mathcal{I}) - \frac{d_j(\mathcal{I})-T_{\ell}}{2} \leq \frac{T_{\ell+1}+T_{\ell}}{2}$
and $r_{j'}(\mathcal{I}_{\ell+1}) = r_{j'}(\mathcal{I}) + \frac{T_{\ell+1}-r_{j'}(\mathcal{I})}{2} > \frac{T_{\ell+1}+T_{\ell}}{2}$.
Therefore, $d_j(\mathcal{I}_{\ell}) < r_{j'}(\mathcal{I}_{\ell+1})$, and the theorem follows.
\end{proof}




\end{document}